\renewcommand{\vec}[1]{\ensuremath{\mathbf{#1}}}
\newtheorem{theorem}{Theorem}
\newtheorem{lemma}{Lemma}
\newtheorem{remark}{Remark}
\newtheorem{definition}{Definition}
\begin{document}



\IEEEoverridecommandlockouts
\title{On the Capacity of the 2-user Gaussian MAC Interfering with a P2P Link}
\author{
\IEEEauthorblockN{Anas Chaaban and Aydin Sezgin}
\IEEEauthorblockA{Emmy-Noether Research Group on Wireless Networks\\
Institute of Telecommunications and Applied Information Theory\\
Ulm University, 89081 Ulm, Germany\\
Email: {anas.chaaban@uni-ulm.de, aydin.sezgin@uni-ulm.de}}
\thanks{%
This work is supported by the German Research Foundation, Deutsche
Forschungsgemeinschaft (DFG), Germany, under grant SE 1697/3.%
}
}

\maketitle


\begin{abstract}
A multiple access channel and a point-to-point channel sharing the same medium for communications are considered. We obtain an outer bound for the capacity region of this setup, and we show that this outer bound is achievable in some cases. These cases are mainly when interference is strong or very strong. A sum capacity upper bound is also obtained, which is nearly tight if the interference power at the receivers is low. In this case, using Gaussian codes and treating interference as noise achieves a sum rate close to the upper bound. 
\end{abstract}

\begin{IEEEkeywords}
Gaussian MAC, capacity region, strong interference, very strong interference. 
\end{IEEEkeywords}

\section{Introduction}
The multiple access channel (MAC) is a scenario where several transmitters want to deliver one message each to one receiver. The capacity region of the MAC is known since 1971 \cite{Ahlswede,Liao}. This setup models users that want to communicate with a base station in cell for example. 

The interference channel (IC) is another well studied model in information theory, that is a model where two transmit-receive pairs want to communicate while causing interference to each other. The capacity of the IC is known in some cases. For instance, the capacity region of the Gaussian IC with very-strong interference regime was obtained by \cite{Carleial_vsi}, its capacity region with strong interference was obtained by \cite{Sato}, and its sum capacity with noisy interference was obtained by \cite{AnnapureddyVeeravalli,ShangKramerChen,MotahariKhandani}.

We merge a MAC and an IC into one setup, by adding one more transmit-receive pair to the communications medium of a 2-user MAC. Then, we have a system with three transmitters and two receivers. The obtained setup consists of a point-to-point (P2P) channel interfering with with a 2-user MAC and we call it a (PIMAC). We study this model and obtain capacity results in some parameter ranges.

We first derive an outer bound on the capacity region of the PIMAC. This outer bound is obtained by using a technique similar to Sato's technique in \cite{Sato}. Then we provide an inner bound on the capacity region of the PIMAC, which is obtained by allowing both receivers to decode all messages in a MAC fashion. The given outer and inner bounds are shown to coincide in some cases. Namely, when the PIMAC has strong or very strong interference.

Moreover, similar to \cite{Carleial_vsi}, a regime where interference does not reduce capacity is obtained. In this case, the interference free capacity region of the MAC as well as the capacity of the P2P channel can be achieved.

The simple scheme of treating interference as noise at each receiver gives a sum capacity lower bound for the PIMAC. Using a genie aided approach similar to \cite{AnnapureddyVeeravalli}, we obtain a sum capacity upper bound which is very close to the lower bound of treating interference as noise if the interference power is low.

The rest of the paper is organized as follows. In Section \ref{Model}, the PIMAC is introduced. The main outer and inner bounds are given in Section \ref{Bounds}. Cases where the outer and inner bounds coincide are given in Section \ref{PIMAC_SI}. The PIMAC with weak interference is considered in Section \ref{WeakInterference} where sum capacity upper and lower bounds are given. Finally, we conclude with Section \ref{conclusion}.

\section{System Model}
\label{Model}

A Gaussian 2-user multiple access channel (MAC) and a Gaussian point-to-point (P2P) channel use the same medium for their communication. They interfere with each other as shown in Figure \ref{pIMAC}. We call the resulting setup the PIMAC, whose received signals can be written as
\begin{align}
Y_1&=X_1+X_2+h_{31}X_3+Z_1,\\
Y_2&=h_{12}X_1+h_{22}X_2+X_3+Z_2,
\end{align}
where $h_{ij}\in\mathbb{R}$ denotes the channel coefficient from transmitter $i\in\{1,2,3\}$ to receiver $j\in\{1,2\}$, $X_i$ is the transmit symbol of transmitter $i$ and $Z_j$ is an additive white Gaussian noise (AWGN) with $Z_j\sim\mathcal{N}(0,1)$. Transmitter $i$ has power constraint $P_i$ so that $\mathbb{E}[X_i^2]\leq P_i$.

\begin{figure}[h]
\centering
\includegraphics[width=0.8\columnwidth]{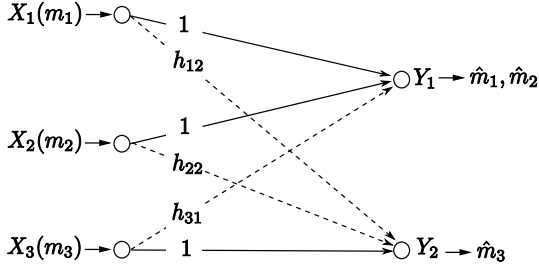}
\caption{Multiple Access Channel interfering with a point-to-point channel.}
\label{pIMAC}
\end{figure}
 
The transmitters encode their messages $m_i$ chosen independently from $\mathcal{M}_i=\{1,\dots,2^{nR_i}\}$ into a codeword  of length $n$ symbols $X_i^n\in\mathbb{R}^n$ $$X_i^n=(X_{i,1},\dots,X_{i,n})$$ using encoding functions $f_i$. Receiver 1 decodes $m_1$ and $m_2$ by using a decoding function $g_1$, i.e. $(\hat{m}_1,\hat{m}_2)=g(Y_1^n)$, and receiver 2 decodes $m_3$ by using a decoding function $g_2$, i.e. $\hat{m}_3=g_2(Y_2^n)$. A decoding error $E_i$ occurs if $m_i\neq\hat{m}_i$. 

An $(n,2^{nR_1},2^{nR_2},2^{nR_3})$ code for the PIMAC consists of encoding functions, decoding functions, and message sets, and induces an average probability of decoding error given by 
\begin{align}
P^{(n)}=\frac{1}{2^{nR_\Sigma}}\sum_{\vec{m}\in\mathcal{M}_1\times\mathcal{M}_2\times\mathcal{M}_3}P\left(\bigcup_{i=1}^3E_i\right),
\end{align}
where $R_\Sigma=\sum_{i=1}^3R_i$ and $\vec{m}=(m_1,m_2,m_3)$. A rate tuple $(R_1,R_2,R_3)$ is achievable if there exists a sequence of $(n,2^{nR_1},2^{nR_2},2^{nR_3})$ codes such that $P^{(n)}]\to0$ as $n\to\infty$. The closure of the set of all achievable rates is the capacity region of the PIMAC denoted $\mathcal{C}$, and the sum capacity is the highest achievable sum-rate 
\begin{align}
C_\Sigma=\max_{(R_1,R_2,R_3)\in\mathcal{C}}R_\Sigma.
\end{align}

In the next section we present an outer bound and an inner bound on the capacity region of the PIMAC, which we later show to be tight in some parameter regimes.

\section{Outer and Inner Bounds}
\label{Bounds}

We start by considering the PIMAC with strong interference. We first use the following definition for notational convenience.
\begin{definition}
We denote a multiple access channel MAC from transmitters $i\in\mathcal{S}\subseteq\{1,2,3\}$ to receiver $j\in\{1,2\}$ with AWGN with variance $1$ by $M(\mathcal{S},j)$ and its corresponding capacity region by $\mathcal{C}^M(\mathcal{S},j)$. This region is given by
\begin{equation*}
\mathcal{C}^M(\mathcal{S},j)=\left\{
\begin{array}{l}
R_i\geq0\ \ \forall i \in \mathcal{S}: \forall \mathcal{T}\subseteq\mathcal{S}\\
\sum_{i\in\mathcal{T}}R_i\leq\frac{1}{2}\log\left(1+\sum_{i\in\mathcal{T}}h_{ij}^2P_i\right)
\end{array}
\right\}.
\end{equation*}
\end{definition}
Now, we are ready to state the following lemma that will serve as an outer bound for $\mathcal{C}$.

\begin{lemma}
\label{pSILemma}
The capacity region of the PIMAC is outer bounded by $\overline{\mathcal{C}}$, $$\mathcal{C}\subseteq\overline{\mathcal{C}}$$ where
\begin{align*}
\overline{\mathcal{C}}=\left\{
\begin{array}{lr}
(R_1,R_2,R_3)\in\mathbb{R}_+^3:&\\
R_3\leq\frac{1}{2}\log(1+P_3)&\\
(R_1,R_2)\in\mathcal{C}^M(\{1,2\},1)&\\
(R_1,R_3)\in\mathcal{C}_M(\{1,3\},2)& \text{ if } h_{12}^2\geq1\\
(R_2,R_3)\in\mathcal{C}_M(\{2,3\},2)& \text{ if } h_{22}^2\geq1\\
(R_1,R_2,R_3)\in\mathcal{C}_M(\{1,2,3\},1)& \text{ if } h_{31}^2\geq1
\end{array}\right\}.
\end{align*} 
\end{lemma}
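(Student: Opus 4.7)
The plan is to verify each of the five defining inequalities of $\overline{\mathcal{C}}$ by Fano's inequality combined with a carefully chosen genie, grouping them into two unconditional bounds and three Sato-style conditional bounds. For the point-to-point bound $R_3\le\tfrac12\log(1+P_3)$, I would hand receiver~2 the genie $(X_1^n,X_2^n)$, which reduces its effective channel to the clean AWGN link $X_3^n+Z_2^n$ and yields the Gaussian capacity formula. For the MAC region $(R_1,R_2)\in\mathcal{C}^M(\{1,2\},1)$ at receiver~1, I would hand receiver~1 the genie $X_3^n$; then $Y_1^n-h_{31}X_3^n$ is a standard two-user Gaussian MAC from $(X_1,X_2)$, and the three defining inequalities drop out of the usual converse steps $nR_i\le I(X_i^n;Y_1^n-h_{31}X_3^n\mid X_j^n)$ and $n(R_1+R_2)\le I(X_1^n,X_2^n;Y_1^n-h_{31}X_3^n)$ together with Gaussian maximization of differential entropy.

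\textbf{The three conditional bounds.} Here I illustrate the template on $(R_1,R_3)\in\mathcal{C}^M(\{1,3\},2)$ under $h_{12}^2\ge1$; the other two are analogous. Equip receiver~2 with the genie $(m_2,m_3)$; after cancellation its effective channel from $X_1^n$ is $\tilde Y_2^n = h_{12}X_1^n+Z_2^n$. Providing the \emph{same} genie to receiver~1 (which only sharpens its Fano bound) reduces its effective channel to $\tilde Y_1^n = X_1^n+Z_1^n$. Introducing an auxiliary $N^n\sim\mathcal{N}(0,(1-h_{12}^{-2})I)$ independent of $X_1^n$ and $Z_2^n$, the distributional identity
\begin{align*}
\tilde Y_1^n \;\stackrel{d}{=}\; h_{12}^{-1}\tilde Y_2^n + N^n
\end{align*}
holds precisely because $h_{12}^2\ge 1$, exhibiting receiver~1's genie-aided channel as a stochastically degraded version of receiver~2's. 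Combined with Fano at receiver~1 this yields $H(m_1\mid Y_2^n,m_2,m_3)\le n\epsilon_n$, and together with Fano at receiver~2 for $m_3$ we get $H(m_1,m_3\mid Y_2^n,m_2)\le 2n\epsilon_n$. A standard MAC converse conditioned on $m_2$ then produces
\begin{align*}
n(R_1+R_3) &\le I(m_1,m_3;Y_2^n\mid m_2) + 2n\epsilon_n \\
&\le h(h_{12}X_1^n+X_3^n+Z_2^n) - h(Z_2^n) + 2n\epsilon_n,
\end{align*}
which, maximized by independent Gaussians, gives $R_1+R_3\le\tfrac12\log(1+h_{12}^2P_1+P_3)$; the individual rate inequalities for $R_1$ and $R_3$ follow from the same argument by conditioning on additional genie that cancels the remaining interferer. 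The bound $(R_2,R_3)\in\mathcal{C}^M(\{2,3\},2)$ under $h_{22}^2\ge1$ is obtained by interchanging the roles of $X_1$ and $X_2$ at receiver~2. The bound $(R_1,R_2,R_3)\in\mathcal{C}^M(\{1,2,3\},1)$ under $h_{31}^2\ge1$ has the roles of the two receivers swapped: giving both receivers the common genie $(m_1,m_2)$, receiver~2 sees $X_3^n+Z_2^n$ while receiver~1 sees $h_{31}X_3^n+Z_1^n$; under $h_{31}^2\ge1$ the latter is the stronger channel, so receiver~1 can also decode $m_3$ and the three-user MAC bound at receiver~1 follows.

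\textbf{Expected obstacle.} The delicate step is the degradation identity. Because $Y_1^n$ also carries the cross-interference $h_{31}X_3^n$, cancelling only $X_2^n$ (i.e.\ using only the genie $m_2$) leaves a residual noise that is \emph{not} a scaled copy of $Z_2^n$ plus independent Gaussian, so the identity cannot close with the single genie $m_2$ alone --- the joint genie $(m_2,m_3)$ is essential. The price is an extra $m_3$-conditioning on the right that must be absorbed by a separate Fano bound at receiver~2. Making sure every piece of genie information on the right is charged back to a vanishing-error term on the left, and that the same accounting is re-verified for the symmetric $h_{22}^2\ge1$ and $h_{31}^2\ge1$ cases, is the main bookkeeping challenge.
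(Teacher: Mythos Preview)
Your proposal is correct and follows essentially the same Sato-style approach as the paper. The paper's version is slightly leaner on the conditional bounds: instead of handing receiver~2 the joint genie $(m_2,m_3)$ and charging $m_3$ back via Fano, it gives only $m_2$, lets receiver~2 decode $m_3$ by achievability, and then reconstructs $\tilde Y_1^n = X_1^n+X_2^n+h_{31}X_3^n+h_{12}^{-1}Z_2^n$ by \emph{adding $X_2^n+h_{31}X_3^n$ back in}; this is directly a less-noisy copy of the full $Y_1^n$, so the degradation identity holds without cancelling $X_3$ and your ``expected obstacle'' (and the separate Fano accounting for $m_3$) disappears.
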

\begin{proof}
The proof is based on arguments similar to Sato's arguments in \cite{Sato} and is given in Appendix \ref{pSILemmaProof}.
\end{proof}

This outer bound $\overline{\mathcal{C}}$ given in Lemma \ref{pSILemma} can be shown to be tight in some cases. For this purpose, we need an achievable rate region for the PIMAC, which we give in the following theorem.

\begin{theorem}
\label{InnerBound}
The capacity region of the PIMAC is inner bounded by $\underline{\mathcal{C}}$, $$\underline{\mathcal{C}}\subseteq\mathcal{C}$$
where
\begin{align}
\underline{\mathcal{C}}=\left\{
\begin{array}{l}
(R_1,R_2,R_3)\in\mathbb{R}_+^3:\\
(R_1,R_2,R_3)\in\mathcal{C}_M(\{1,2,3\},1)\\
(R_1,R_2,R_3)\in\mathcal{C}_M(\{1,2,3\},2)
\end{array}\right\}.
\end{align} 
\end{theorem}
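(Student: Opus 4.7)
The approach is to let both receivers decode all three messages, effectively converting the PIMAC into a compound $3$-user multiple access channel. This is a valid relaxation: requiring receiver~$1$ to decode $m_3$ (in addition to its intended $m_1,m_2$) and requiring receiver~$2$ to decode $m_1,m_2$ (in addition to its intended $m_3$) can only shrink the achievable region relative to the original PIMAC decoding requirement, so anything achievable in this stronger sense is automatically achievable in the original problem.

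For the code construction I would use independent Gaussian codebooks: transmitter~$i$ draws its length-$n$ codewords with i.i.d.\ $\mathcal{N}(0,P_i)$ entries, which satisfies the power constraint in expectation (and in probability after the standard truncation/expurgation step). Both receivers then perform joint typicality decoding over all three codebooks with respect to the Gaussian input distribution. Receiver~$1$ sees the channel $Y_1 = X_1 + X_2 + h_{31}X_3 + Z_1$, which is a $3$-user Gaussian MAC with input gains $1,1,h_{31}$ and unit-variance noise; by the standard MAC achievability argument (simultaneous decoding with Gaussian inputs, e.g.\ Cover--Thomas), the average probability of decoding error at receiver~$1$ vanishes as $n\to\infty$ for every rate tuple strictly inside $\mathcal{C}_M(\{1,2,3\},1)$ as defined at the start of Section~\ref{Bounds}. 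An identical argument applied to $Y_2 = h_{12}X_1 + h_{22}X_2 + X_3 + Z_2$, a $3$-user Gaussian MAC with gains $h_{12},h_{22},1$, gives vanishing error probability at receiver~$2$ for every rate tuple strictly inside $\mathcal{C}_M(\{1,2,3\},2)$.

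For any rate tuple in the intersection $\underline{\mathcal{C}}$, both error events have vanishing probability, so by the union bound the overall decoding error $P^{(n)}$ tends to zero. A standard averaging/expurgation argument then extracts a deterministic code with the same property, and closure under limits extends the region to its closure. There is no genuine obstacle in this proof; the only conceptual point worth highlighting is the legitimacy of forcing receiver~$2$ to decode $m_1,m_2$ even though these messages are not in its intended message set, and the only technical input is the multi-user joint typicality lemma instantiated with Gaussian codebooks.
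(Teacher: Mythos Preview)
Your proposal is correct and follows exactly the paper's approach: both receivers decode all three messages in a MAC fashion, so the achievable region is the intersection $\mathcal{C}_M(\{1,2,3\},1)\cap\mathcal{C}_M(\{1,2,3\},2)$. Your write-up simply supplies the standard details (Gaussian random codebooks, joint typicality decoding, union bound) that the paper leaves implicit.
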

\begin{proof}
Each receiver simply decodes all messages in a MAC fashion. The achievable rates must be in the intersection of $\mathcal{C}_M(\{1,2,3\},1)$ and $\mathcal{C}_M(\{1,2,3\},2)$ and the statement of the theorem follows.
\end{proof}

\section{The PIMAC with strong interference}
\label{PIMAC_SI}
In general, the bounds given by $\overline{\mathcal{C}}$ and $\underline{\mathcal{C}}$, given in Lemma \ref{pSILemma} and Theorem \ref{InnerBound} respectively, are not always tight. However, under some conditions, they coincide. These conditions are given in the following theorem.

\begin{theorem}
\label{pSITheorem}
If the PIMAC satisfies the following conditions
\begin{align}
\label{pSI1}
h_{12}^2,h_{22}^2,h_{31}^2&\geq1\\
\label{pSI2}
h_{12}^2P_1+h_{22}^2P_2+P_3&\geq P_1+P_2+h_{31}^2P_3,
\end{align}
then the outer bound $\overline{\mathcal{C}}$ and the inner bound $\underline{\mathcal{C}}$ coincide and hence, the capacity region of the PIMAC is
\begin{align}
\mathcal{C}=\overline{\mathcal{C}}=\underline{\mathcal{C}}.
\end{align}
\end{theorem}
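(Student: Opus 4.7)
The plan is to prove the equality $\overline{\mathcal{C}} = \underline{\mathcal{C}}$ by showing only the inclusion $\overline{\mathcal{C}} \subseteq \underline{\mathcal{C}}$, since the reverse inclusion follows at once from the chain $\underline{\mathcal{C}} \subseteq \mathcal{C} \subseteq \overline{\mathcal{C}}$ supplied by Theorem \ref{InnerBound} and Lemma \ref{pSILemma}. Under the strong-interference hypothesis (\ref{pSI1}), all three conditional clauses in the definition of $\overline{\mathcal{C}}$ are simultaneously active, so $\overline{\mathcal{C}}$ coincides with the set of nonnegative rate triples satisfying $R_3 \leq \frac{1}{2}\log(1+P_3)$ together with the constraints of $\mathcal{C}^M(\{1,2\},1)$, $\mathcal{C}^M(\{1,3\},2)$, $\mathcal{C}^M(\{2,3\},2)$, and $\mathcal{C}^M(\{1,2,3\},1)$. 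On the other hand, $\underline{\mathcal{C}} = \mathcal{C}^M(\{1,2,3\},1) \cap \mathcal{C}^M(\{1,2,3\},2)$.

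Since $\mathcal{C}^M(\{1,2,3\},1)$ is already a constituent of $\overline{\mathcal{C}}$, the task reduces to checking that every defining inequality of $\mathcal{C}^M(\{1,2,3\},2)$ is implied by the inequalities in $\overline{\mathcal{C}}$. The easy constraints I would dispatch first. The single-rate bounds $R_1 \leq \frac{1}{2}\log(1+h_{12}^2 P_1)$ and $R_2 \leq \frac{1}{2}\log(1+h_{22}^2 P_2)$ are looser than the bounds $R_i \leq \frac{1}{2}\log(1+P_i)$ supplied by $\mathcal{C}^M(\{1,2\},1)$, because $h_{12}^2, h_{22}^2 \geq 1$; the bound $R_3 \leq \frac{1}{2}\log(1+P_3)$ appears verbatim in $\overline{\mathcal{C}}$. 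The pairwise sum $R_1+R_2 \leq \frac{1}{2}\log(1+h_{12}^2 P_1 + h_{22}^2 P_2)$ is dominated by $R_1+R_2 \leq \frac{1}{2}\log(1+P_1+P_2)$ from $\mathcal{C}^M(\{1,2\},1)$ for the same reason, while the sum constraints on $R_1+R_3$ and $R_2+R_3$ appear identically in $\mathcal{C}^M(\{1,3\},2)$ and $\mathcal{C}^M(\{2,3\},2)$.

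The only step that genuinely invokes hypothesis (\ref{pSI2}) is the triple-sum constraint $R_1+R_2+R_3 \leq \frac{1}{2}\log(1+h_{12}^2 P_1 + h_{22}^2 P_2 + P_3)$ coming from $\mathcal{C}^M(\{1,2,3\},2)$. From the membership in $\mathcal{C}^M(\{1,2,3\},1)$ built into $\overline{\mathcal{C}}$, every point of the outer bound satisfies $R_1+R_2+R_3 \leq \frac{1}{2}\log(1+P_1+P_2+h_{31}^2 P_3)$, and (\ref{pSI2}) is precisely the inequality $P_1 + P_2 + h_{31}^2 P_3 \leq h_{12}^2 P_1 + h_{22}^2 P_2 + P_3$, so by monotonicity of $\log$ the outer bound already enforces the desired constraint. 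Concatenating these verifications establishes $\overline{\mathcal{C}} \subseteq \underline{\mathcal{C}}$ and hence the theorem.

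The main obstacle is conceptual rather than computational: one must recognize in advance that, under strong interference, the binding sum-rate bound comes from joint decoding at receiver 1 and that (\ref{pSI2}) is exactly the algebraic condition guaranteeing this bound also suffices at receiver 2. Once this dominance has been set up, the rest of the argument is a routine entry-by-entry comparison of two three-user MAC polytopes driven by the coefficient inequalities in (\ref{pSI1}).
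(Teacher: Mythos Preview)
Your argument is correct and follows essentially the same route as the paper: both proofs reduce to an entry-by-entry comparison of the MAC polytope constraints, using (\ref{pSI1}) to handle the single and pairwise bounds and (\ref{pSI2}) to make the triple-sum bound at receiver~1 dominate that at receiver~2. Your use of the chain $\underline{\mathcal{C}}\subseteq\mathcal{C}\subseteq\overline{\mathcal{C}}$ to dispense with one inclusion is a mild streamlining of the paper's two-sided expansion, but the substance is identical.
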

\begin{proof}
By expanding the achievable region $\underline{\mathcal{C}}$, and using conditions (\ref{pSI1}) and (\ref{pSI2}), it can be shown that $\underline{\mathcal{C}}$ reduces to $\overline{\mathcal{C}}$ and the result follows. Details are given in Appendix \ref{pSIProof}. 
\end{proof}

In conclusion, if the PIMAC has strong interference, i.e. (\ref{pSI1}) is satisfied, and if (\ref{pSI2}) holds, then the capacity region of the PIMAC is the intersection of two MAC capacity regions, from all transmitters to each receiver. The capacity region of a PIMAC satisfying (\ref{pSI1}) and (\ref{pSI2}) is shown in Figure \ref{pIMAC_SI_Region}.
\begin{figure}
\centering
\psfragscanon
\psfrag{x}[t]{$R_1$}
\psfrag{y}[b]{$R_2$}
\psfrag{z}[b]{$R_3$}
\includegraphics[width=\columnwidth]{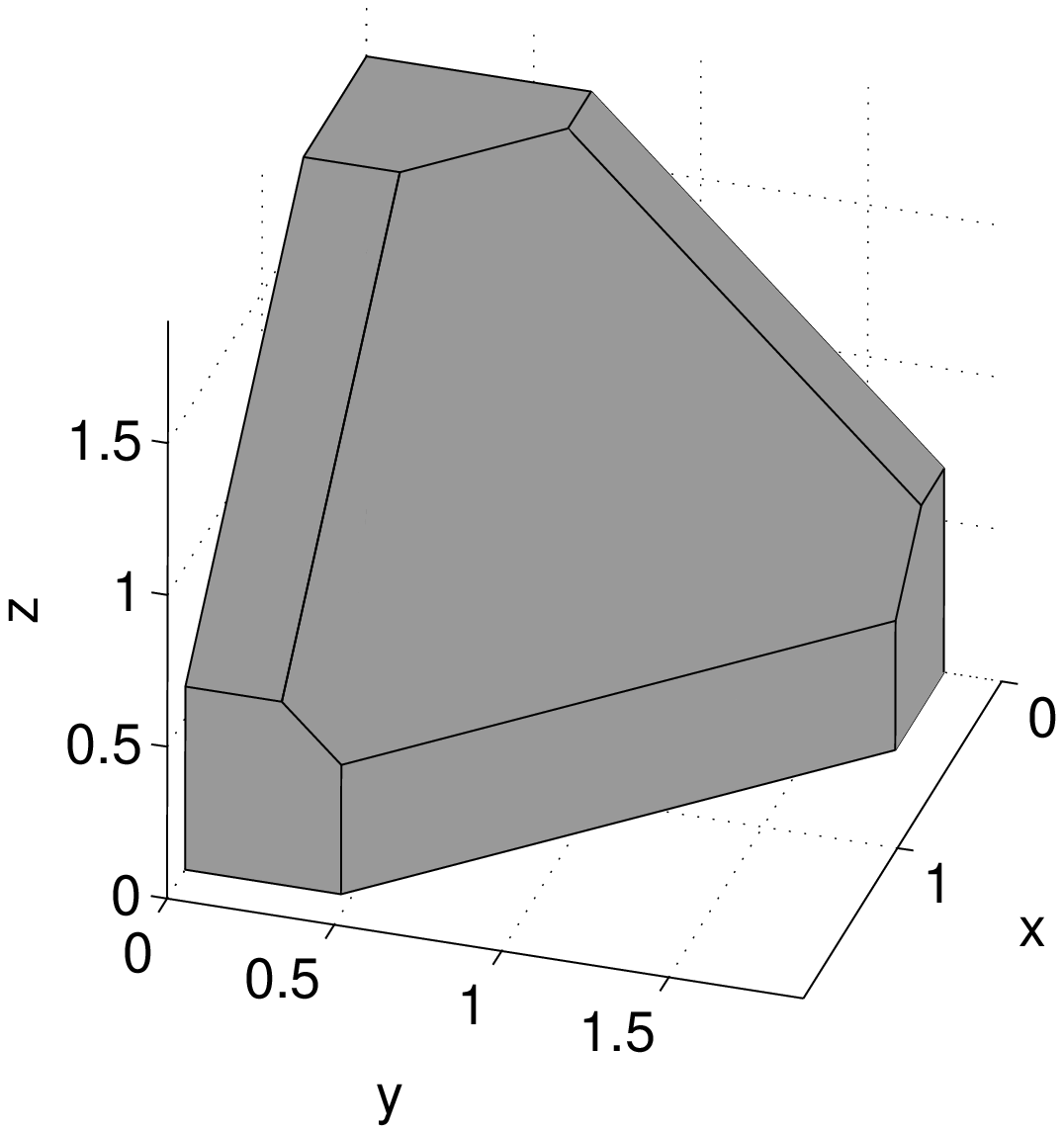}
\caption{The capacity region of a PIMAC satisfying conditions (\ref{pSI1}) and (\ref{pSI2}), with $P_1=P_2=P_3=10$, $h_{12}=1.2$, $h_{22}=1.5$ and $h_{31}=1.3$.}
\label{pIMAC_SI_Region}
\end{figure}

Let us now consider special cases of the PIMAC with strong interference $h_{12}^2,h_{22}^2,h_{31}^2\geq1$. According to Lemma \ref{pSILemma}, any achievable rate tuple $(R_1,R_2,R_3)\in\mathbb{R}_+^3$ for PIMAC with strong interference must satisfy
\begin{align}
\label{SIB3}
(R_1,R_3)&\in\mathcal{C}_M(\{1,3\},2)\\
\label{SIB4}
(R_2,R_3)&\in\mathcal{C}_M(\{2,3\},2)\\
\label{SIB5}
(R_1,R_2,R_3)&\in\mathcal{C}_M(\{1,2,3\},1).
\end{align}
where the remaining bounds $R_3\leq\frac{1}{2}\log(1+P_3)$ and $(R_1,R_2)\in\mathcal{C}^M(\{1,2\},1)$ were removed because they are already included in (\ref{SIB3})-(\ref{SIB5}).

\subsection{Very strong interference from transmitter 1 to receiver 2}
Consider a PIMAC with strong interference such that interference from transmitter 1 to receiver 2 is very strong, 
\begin{align}
\label{VSI12-1}
h_{12}^2&\geq1+P_3+h_{22}^2P_2\\
h_{22}^2&\geq1\\
\label{VSI12-3}
h_{31}^2&\geq1.
\end{align}
Since $h_{12}^2\geq1+P_3+h_{22}^2P_2\geq1+P_3$, then it can be shown that the bound (\ref{SIB3}) is redundant and can be removed. Thus the outer bound becomes
\begin{align}
\left\{
\begin{array}{l}
(R_1,R_2,R_3)\in\mathbb{R}_+^3:\\
(R_2,R_3)\in\mathcal{C}_M(\{2,3\},2)\\
(R_1,R_2,R_3)\in\mathcal{C}_M(\{1,2,3\},1)
\end{array}\right\}.
\end{align} 
Moreover $h_{12}^2\geq1+P_3+h_{22}^2P_2\Rightarrow$
\begin{align}
\frac{1}{2}\log\left(1+\frac{h_{12}^2P_1}{1+P_3+h_{22}^2P_2}\right)\geq\frac{1}{2}\log(1+P_1).
\end{align}
This means that the second receiver can decode $X_1^n$ while treating $X_2^n$ and $X_3^n$ as noise without imposing an additional rate constraint on the achievable rates. Then by subtracting the contribution of $X_1^n$ from $Y_2^n$, and decoding $X_2^n$ and $X_3^n$ in a MAC fashion at the second receiver, the rate region $\mathcal{C}^M(\{2,3\},2)$ can be achieved. If the first receiver decodes all signals in a MAC fashion, the rate region $\mathcal{C}_M(\{1,2,3\},1)$ can be achieved. Therefore we have the following theorem.
\begin{theorem}
If the PIMAC satisfies (\ref{VSI12-1})-(\ref{VSI12-3}) then its capacity region is
\begin{align}
\mathcal{C}=\left\{
\begin{array}{l}
(R_1,R_2,R_3)\in\mathbb{R}_+^3:\\
(R_2,R_3)\in\mathcal{C}_M(\{2,3\},2)\\
(R_1,R_2,R_3)\in\mathcal{C}_M(\{1,2,3\},1)
\end{array}\right\}.
\end{align} 
\end{theorem}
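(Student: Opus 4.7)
The plan is to combine Lemma \ref{pSILemma} with a slightly adapted version of the scheme in Theorem \ref{InnerBound}, and to show that under (\ref{VSI12-1})--(\ref{VSI12-3}) the outer and inner bounds collapse to the claimed region.

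For the converse, I would first invoke Lemma \ref{pSILemma} with $h_{12}^2,h_{22}^2,h_{31}^2\geq 1$, which activates all five constraints listed there. Two of these are immediately redundant: $R_3\leq\tfrac12\log(1+P_3)$ is included in $\mathcal{C}_M(\{2,3\},2)$ (take $R_2=0$), and $\mathcal{C}^M(\{1,2\},1)$ is included in $\mathcal{C}_M(\{1,2,3\},1)$ (take $R_3=0$). The only nontrivial step is to show that $(R_1,R_3)\in\mathcal{C}_M(\{1,3\},2)$ is also redundant. Its individual-rate bound $R_1\leq\tfrac12\log(1+h_{12}^2P_1)$ is looser than the bound $R_1\leq\tfrac12\log(1+P_1)$ coming from $\mathcal{C}_M(\{1,2,3\},1)$ since $h_{12}^2\geq 1$, and its $R_3$ bound is already retained. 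For the sum-rate constraint $R_1+R_3\leq\tfrac12\log(1+h_{12}^2P_1+P_3)$, I would add the two single-rate bounds $R_1\leq\tfrac12\log(1+P_1)$ and $R_3\leq\tfrac12\log(1+P_3)$ to obtain $R_1+R_3\leq\tfrac12\log((1+P_1)(1+P_3))$, and then verify algebraically that $(1+P_1)(1+P_3)\leq 1+h_{12}^2P_1+P_3$ reduces to $1+P_3\leq h_{12}^2$, which is implied by (\ref{VSI12-1}).

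For achievability, I would let receiver 1 decode $(m_1,m_2,m_3)$ jointly as a three-user MAC on $Y_1^n$, realizing $\mathcal{C}_M(\{1,2,3\},1)$ exactly as in Theorem \ref{InnerBound}. At receiver 2 I would replace the MAC strategy by a two-stage scheme: first decode $m_1$ from $Y_2^n$ while treating $X_2^n,X_3^n$ as Gaussian noise, which is reliable whenever $R_1\leq\tfrac12\log\bigl(1+\tfrac{h_{12}^2P_1}{1+h_{22}^2P_2+P_3}\bigr)$. Condition (\ref{VSI12-1}) guarantees this rate bound is at least $\tfrac12\log(1+P_1)$, so it is automatically implied by receiver 1's MAC constraint on $R_1$ and imposes nothing new. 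After subtracting $h_{12}X_1^n$ from $Y_2^n$, receiver 2 faces the clean two-user MAC $h_{22}X_2^n+X_3^n+Z_2^n$ and can jointly decode $(m_2,m_3)$ at any rate pair in $\mathcal{C}_M(\{2,3\},2)$ using Gaussian codebooks of powers $P_2$ and $P_3$.

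The main obstacle is the sum-rate redundancy step in the converse, since the other reductions are straightforward subset inclusions. Once that algebraic check is in place, the rest of the argument amounts to assembling Lemma \ref{pSILemma}, the MAC inner-bound decoding of Theorem \ref{InnerBound} at receiver 1, and the strip-off decoding at receiver 2 enabled by the very strong condition.
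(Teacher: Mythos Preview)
Your proposal is correct and matches the paper's approach: both start from the strong-interference outer bound of Lemma \ref{pSILemma}, use $h_{12}^2\geq 1+P_3$ to eliminate the $(R_1,R_3)\in\mathcal{C}_M(\{1,3\},2)$ constraint, and achieve the resulting region by having receiver 1 decode all three messages jointly while receiver 2 first strips off $X_1^n$ (treating the rest as noise) and then jointly decodes $(m_2,m_3)$. In fact, you spell out the sum-rate redundancy check $(1+P_1)(1+P_3)\leq 1+h_{12}^2P_1+P_3\Leftrightarrow h_{12}^2\geq 1+P_3$ that the paper only asserts with ``it can be shown.''
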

Similarly if interference from transmitter 2 to receiver 2 is very strong.

\subsection{Very strong interference from transmitter 3 to receiver 1}
Suppose that the PIMAC has
\begin{align}
\label{VSI31-1}
h_{12}^2&\geq1\\
h_{22}^2&\geq1\\
\label{VSI31-3}
h_{31}^2&\geq1+P_1+P_2.
\end{align}
In this case, the rate restriction (\ref{SIB5}) can be replaced with
\begin{align}
(R_1,R_2)&\in\mathcal{C}_M(\{1,2\},1),
\end{align}
which follows since the rate upper bounds on $R_3$, $R_1+R_3$, $R_2+R_3$, and $R_1+R_2+R_3$ are all redundant given $(R_1,R_3)\in\mathcal{C}_M(\{1,3\},2)$ and $(R_2,R_3)\in\mathcal{C}_M(\{2,3\},2)$. Therefore, the following capacity outer bound holds
\begin{align}
\left\{
\begin{array}{l}
(R_1,R_2,R_3)\in\mathbb{R}_+^3:\\
(R_1,R_3)\in\mathcal{C}_M(\{1,3\},2)\\
(R_2,R_3)\in\mathcal{C}_M(\{2,3\},2)\\
(R_1,R_2)\in\mathcal{C}_M(\{1,2\},1)
\end{array}\right\}.
\end{align}

This outer bound can be achieved if we add one more condition on the PIMAC as we show next. If (\ref{VSI31-3}) holds then receiver 1 can decode $X_3^n$ while treating $X_1^n$ and $X_2^n$ as noise without imposing an additional rate constraint on $m_3$. This follows since (\ref{VSI31-3}) $\Rightarrow$
\begin{align}
\frac{1}{2}\log\left(1+\frac{h_{31}^2P_3}{1+P_1+P_2}\right)\geq\frac{1}{2}\log(1+P_3).
\end{align}
Thus, receiver 1 can remove the contribution of $X_3^n$ from $Y_1^n$ and then decode $X_1^n$ and $X_2^n$ achieving $\mathcal{C}^M(\{1,2\},1)$. Let receiver 2 decode all signals in a MAC fashion achieving the region $(R_1,R_2,R_3)\in\mathcal{C}_M(\{1,2,3\},2)$. If we have
\begin{align}
\label{VSI31-4}
h_{12}^2P_1+h_{22}^2P_2\geq(P_1+P_2)(1+P_3),
\end{align}
then the bound on $R_1+R_2+R_3$ in $\mathcal{C}_M(\{1,2,3\},2)$ becomes redundant and can be removed. This follows since
\begin{align}
&h_{12}^2P_1+h_{22}^2P_2\geq(P_1+P_2)(1+P_3)\Rightarrow\\
&\frac{1}{2}\log(1+h_{12}^2P_1+h_{22}^2P_2+P_3)\geq\frac{1}{2}\log(1+P_1+P_2)\nonumber\\
&\hspace{5cm}+\frac{1}{2}\log(1+P_3).
\end{align}
Thus, under condition (\ref{VSI31-4}) and given $(R_1,R_2)\in\mathcal{C}^M(\{1,2\},1)$ we can replace $(R_1,R_2,R_3)\in\mathcal{C}_M(\{1,2,3\},2)$ by 
\begin{align}
(R_1,R_2)&\in\mathcal{C}_M(\{1,2\},2)\\
(R_1,R_3)&\in\mathcal{C}_M(\{1,3\},2)\\
(R_2,R_3)&\in\mathcal{C}_M(\{2,3\},2).
\end{align}
Now, since the region $\mathcal{C}_M(\{1,2\},2)$ is larger than $\mathcal{C}_M(\{1,2\},1)$, we obtain the following theorem.
\begin{theorem}
If the PIMAC satisfies (\ref{VSI31-1})-(\ref{VSI31-3}) and (\ref{VSI31-4}), then its capacity region is
\begin{align}
\mathcal{C}=\left\{
\begin{array}{l}
(R_1,R_2,R_3)\in\mathbb{R}_+^3:\\
(R_1,R_3)\in\mathcal{C}_M(\{1,3\},2)\\
(R_2,R_3)\in\mathcal{C}_M(\{2,3\},2)\\
(R_1,R_2)\in\mathcal{C}_M(\{1,2\},1)
\end{array}\right\}.
\end{align} 
\end{theorem}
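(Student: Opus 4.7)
The proof will have two parts, a converse and an achievability argument, both of which reuse the reductions that were already carried out in the paragraphs preceding the theorem.

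For the converse, the plan is to start from the outer bound $\overline{\mathcal{C}}$ of Lemma \ref{pSILemma}, which applies under (\ref{VSI31-1})--(\ref{VSI31-3}) since $h_{12}^2, h_{22}^2, h_{31}^2 \geq 1$. I would then argue that under condition (\ref{VSI31-3}), namely $h_{31}^2 \geq 1+P_1+P_2$, the constraint $(R_1,R_2,R_3)\in\mathcal{C}_M(\{1,2,3\},1)$ in $\overline{\mathcal{C}}$ can be weakened to $(R_1,R_2)\in\mathcal{C}_M(\{1,2\},1)$ together with the already present bounds $(R_1,R_3)\in\mathcal{C}_M(\{1,3\},2)$ and $(R_2,R_3)\in\mathcal{C}_M(\{2,3\},2)$. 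Concretely, I would verify that each of the MAC-$\{1,2,3\}$-at-$1$ bounds involving $R_3$ (that is, the bounds on $R_3$, $R_1+R_3$, $R_2+R_3$ and $R_1+R_2+R_3$) is implied by the corresponding bound at receiver $2$ once $h_{31}^2 P_3 \geq (1+P_1+P_2)P_3$ is available; this is exactly the redundancy argument the authors point to just before stating the theorem.

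For achievability, I would present a scheme in which receiver 1 first decodes $m_3$ while treating $X_1^n+X_2^n$ as noise, and then, after subtracting $h_{31}X_3^n$ from $Y_1^n$, jointly decodes $(m_1,m_2)$ as in a two-user MAC. The first decoding step costs nothing because (\ref{VSI31-3}) gives $\tfrac12\log(1+h_{31}^2 P_3/(1+P_1+P_2)) \geq \tfrac12\log(1+P_3)$, so $m_3$ is reliably recoverable at rate up to $\tfrac12\log(1+P_3)$; the second step then yields all of $\mathcal{C}_M(\{1,2\},1)$. Receiver 2 simply decodes $(m_1,m_2,m_3)$ jointly, which achieves $\mathcal{C}_M(\{1,2,3\},2)$, so the intersection is an inner bound.

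The last step is to argue that, under (\ref{VSI31-4}), this intersection collapses to the region in the theorem. I would invoke the implication the authors already record, namely that $h_{12}^2P_1+h_{22}^2P_2\geq(P_1+P_2)(1+P_3)$ makes the triple-sum bound in $\mathcal{C}_M(\{1,2,3\},2)$ redundant given $(R_1,R_2)\in\mathcal{C}_M(\{1,2\},1)$, and then check the pairwise sum bounds as well; what remains of $\mathcal{C}_M(\{1,2,3\},2)$ is $(R_1,R_3)\in\mathcal{C}_M(\{1,3\},2)$, $(R_2,R_3)\in\mathcal{C}_M(\{2,3\},2)$, and $(R_1,R_2)\in\mathcal{C}_M(\{1,2\},2)$. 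Since $h_{12}^2,h_{22}^2\geq 1$ gives $\mathcal{C}_M(\{1,2\},1)\subseteq\mathcal{C}_M(\{1,2\},2)$, the latter is automatically satisfied, and the inner bound matches the converse.

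The main obstacle I anticipate is the bookkeeping in the first step: verifying case by case that every sum-rate bound from the MAC-$\{1,2,3\}$-at-$1$ constraint is dominated by a corresponding bound from the other constraints under (\ref{VSI31-3}). This is elementary but a little tedious; the achievability side, by contrast, is essentially a direct application of successive decoding combined with the redundancy observation (\ref{VSI31-4}) which the authors have already verified.
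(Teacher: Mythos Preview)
Your proposal is correct and follows essentially the same two-part argument as the paper: the converse specializes the outer bound $\overline{\mathcal{C}}$ of Lemma~\ref{pSILemma} under (\ref{VSI31-3}) by showing the $R_3$-involving constraints of $\mathcal{C}_M(\{1,2,3\},1)$ are redundant, and the achievability uses successive decoding of $m_3$ at receiver~1 followed by joint decoding of $(m_1,m_2)$, with receiver~2 decoding all three messages and then (\ref{VSI31-4}) together with $h_{12}^2,h_{22}^2\geq1$ removing the extraneous constraints. The only minor imprecision is that the redundancy of, e.g., the $R_1+R_3$ bound at receiver~1 is established by combining $R_1\leq\tfrac12\log(1+P_1)$ with $R_3\leq\tfrac12\log(1+P_3)$ rather than via a single ``corresponding bound at receiver~2,'' but this is exactly the elementary bookkeeping you already flag.
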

This region is shown in Figure \ref{pIMAC_VSI31_Region} for a setup with $P_1=P_2=P_3=10$, $h_{12}=4.3$, $h_{22}=2$ and $h_{31}=4.6$.
\begin{figure}
\centering
\psfragscanon
\psfrag{x}[t]{$R_1$}
\psfrag{y}[b]{$R_2$}
\psfrag{z}[b]{$R_3$}
\includegraphics[width=\columnwidth]{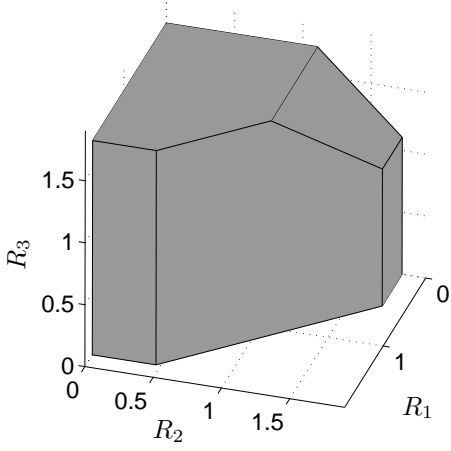}
\caption{The capacity region of a PIMAC satisfying conditions (\ref{VSI31-1})-(\ref{VSI31-3}) and (\ref{VSI31-4}), with $P_1=P_2=P_3=10$, $h_{12}=4.3$, $h_{22}=2$, and $h_{31}=4.6$.}
\label{pIMAC_VSI31_Region}
\end{figure}

\subsection{The PIMAC with very strong interference}
Consider now the PIMAC with very strong interference such that
\begin{align}
\label{pIMAC-VSI1}
h_{12}^2,h_{22}^2&\geq1+P_3\\
\label{pIMAC-VSI2}
h_{31}^2&\geq1+P_1+P_2.
\end{align}
In this case, interference does not reduce capacity, i.e. the capacity region of the interference free MAC $M(\{1,2\},1)$ can be achieved by $(R_1,R_2)$ and the point-to-point capacity $R_3=\frac{1}{2}\log(1+P_3)$ can also be achieved. 
Clearly, the following outer bound holds for the PIMAC
\begin{align}
\left\{
\begin{array}{l}
(R_1,R_2,R_3)\in\mathbb{R}_+^3:\\
R_3\leq\frac{1}{2}\log(1+P_3)\\
(R_1,R_2)\in\mathcal{C}_M(\{1,2\},1)
\end{array}\right\}.
\end{align}
This outer bound can be achieved under the given conditions (\ref{pIMAC-VSI1}) and (\ref{pIMAC-VSI2}) by decoding interference first at each receiver while treating the desired signal as noise, and then decoding the desired signal after removing the contribution of interference. Thus, as if interference is not present, the capacity region of the pIMAC becomes as given in the following theorem.
\begin{theorem}
If the PIMAC satisfies (\ref{pIMAC-VSI1}) and (\ref{pIMAC-VSI2}), then its capacity region is
\begin{align}
\mathcal{C}=\left\{
\begin{array}{l}
(R_1,R_2,R_3)\in\mathbb{R}_+^3:\\
R_3\leq\frac{1}{2}\log(1+P_3)\\
(R_1,R_2)\in\mathcal{C}_M(\{1,2\},1)
\end{array}\right\}.
\end{align} 
\end{theorem}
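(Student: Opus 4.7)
The plan is to establish the region as both an outer bound and an achievable region. The outer bound direction will be essentially free from Lemma~\ref{pSILemma}: the conditions (\ref{pIMAC-VSI1})--(\ref{pIMAC-VSI2}) imply in particular $h_{12}^2, h_{22}^2, h_{31}^2 \geq 1$, so all the bounds listed in $\overline{\mathcal{C}}$ apply. In particular, the bounds $R_3 \leq \tfrac{1}{2}\log(1+P_3)$ and $(R_1,R_2) \in \mathcal{C}^M(\{1,2\},1)$ are always present in $\overline{\mathcal{C}}$ regardless of extra conditions, which already proves the outer bound direction. No further converse argument is needed.

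For achievability, the scheme will use independent Gaussian codebooks $X_i^n \sim \mathcal{N}(0,P_i)$, and both receivers will perform successive decoding in which the interfering signal is decoded first (treating the desired signal as noise), then stripped, after which the desired signal is decoded from a clean channel. At receiver~1, I first decode $m_3$ treating $X_1^n+X_2^n$ as noise; this works whenever $R_3 \leq \tfrac{1}{2}\log\bigl(1+h_{31}^2 P_3/(1+P_1+P_2)\bigr)$, and condition (\ref{pIMAC-VSI2}) lifts the right side to at least $\tfrac{1}{2}\log(1+P_3)$, which is already the assumed bound on $R_3$. Subtracting $h_{31}X_3^n$ leaves the clean MAC $Y_1^n - h_{31}X_3^n = X_1^n + X_2^n + Z_1^n$, on which any $(R_1,R_2) \in \mathcal{C}^M(\{1,2\},1)$ is reliably decodable.

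At receiver~2, I symmetrically decode $(m_1,m_2)$ first in a MAC fashion while treating $X_3^n$ as noise, so the effective noise has variance $1+P_3$ and the three MAC constraints become
\begin{align*}
R_1 &\leq \tfrac{1}{2}\log\bigl(1 + \tfrac{h_{12}^2 P_1}{1+P_3}\bigr),\\
R_2 &\leq \tfrac{1}{2}\log\bigl(1 + \tfrac{h_{22}^2 P_2}{1+P_3}\bigr),\\
R_1+R_2 &\leq \tfrac{1}{2}\log\bigl(1 + \tfrac{h_{12}^2 P_1 + h_{22}^2 P_2}{1+P_3}\bigr).
\end{align*}
Condition (\ref{pIMAC-VSI1}) makes $h_{12}^2/(1+P_3), h_{22}^2/(1+P_3) \geq 1$, so each right side is at least the corresponding bound defining $\mathcal{C}^M(\{1,2\},1)$; hence every $(R_1,R_2)$ we target at receiver~1 is also decodable at receiver~2. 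After subtracting $h_{12}X_1^n+h_{22}X_2^n$, receiver~2 sees $X_3^n+Z_2^n$ and recovers $m_3$ at any rate $R_3 \leq \tfrac{1}{2}\log(1+P_3)$.

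The argument is almost entirely a bookkeeping exercise once the right order of decoding is fixed; the only thing to be careful about is checking that all three MAC constraints (single-rate as well as sum-rate) at receiver~2 are simultaneously relaxed by condition (\ref{pIMAC-VSI1}), which they are because the same factor $1+P_3$ appears uniformly in the denominator. Combining the outer and inner bounds yields the claimed capacity region.
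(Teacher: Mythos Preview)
Your proposal is correct and matches the paper's approach: the outer bound comes from the interference-free constraints that are always part of $\overline{\mathcal{C}}$, and achievability is by successive decoding at each receiver (decode interference first while treating the desired signal as noise, strip it, then decode the desired signal), with conditions (\ref{pIMAC-VSI1}) and (\ref{pIMAC-VSI2}) ensuring the interference-decoding step imposes no additional rate constraints. Your write-up is in fact more detailed than the paper's, which simply asserts the scheme works without spelling out the three MAC constraints at receiver~2.
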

As example for this case, the capacity region of a PIMAC with very strong interference is shown in Figure \ref{pIMAC_VSI_Region}.
\begin{figure}
\centering
\psfragscanon
\psfrag{x}[t]{$R_1$}
\psfrag{y}[b]{$R_2$}
\psfrag{z}[b]{$R_3$}
\includegraphics[width=\columnwidth]{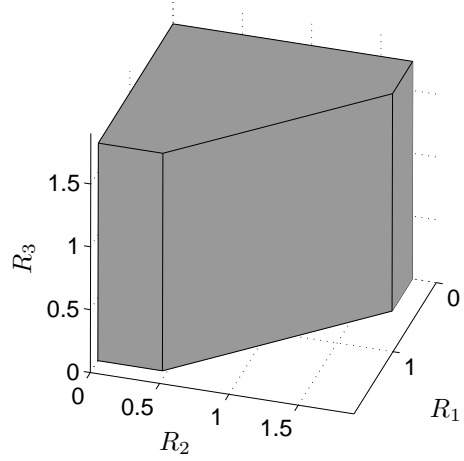}
\caption{The capacity region of a PIMAC satisfying conditions (\ref{pSI1}) and (\ref{pSI2}), with $P_1=P_2=P_3=10$, $h_{12}=h_{22}=3.5$, and $h_{31}=4.6$.}
\label{pIMAC_VSI_Region}
\end{figure}

\section{The PIMAC with weak interference}
\label{WeakInterference}
In \cite{AnnapureddyVeeravalli,ShangKramerChen,MotahariKhandani}, a genie aided technique was used to obtain a sum capacity upper bound for the IC that coincides with the simple lower bound of treating interference as noise. Thus the sum capacity of the IC in the so-called noisy interference regime was obtained. We use a similar technique to that used in \cite{AnnapureddyVeeravalli} to obtain an upper bound for the PIMAC. This upper bound is stated in the following theorem.

\begin{theorem}
\label{SumCapacityUpperBound}
The sum capacity of the PIMAC is upper bounded by $\overline{C}_\Sigma$, i.e. $$C_\Sigma\leq\overline{C}_\Sigma$$ where
\begin{align}
\overline{C}_\Sigma&\triangleq\min_{\substack{|\rho_1|,|\rho_2|\leq1,\\
\eta_1^2\leq1-\rho_2^2,\\
\eta_2^2\leq1-\rho_1^2}} I(X_{1G},X_{2G};Y_{1G},S_{1G})+I(X_{3G};Y_{2G},S_{2G})
\end{align}
with $X_{iG}\sim\mathcal{N}(0,P_i)$, $Y_{iG}$ is the corresponding channel output when the input is $X_{iG}$ and 
\begin{align}
S_{1G}&=h_{12}X_{1G}+h_{22}X_{2G}+\eta_1W_1\\
S_{2G}&=h_{31}X_{3G}+\eta_2W_2,
\end{align}
where $W_i\sim\mathcal{N}(0,1)$ for $i\in\{1,2\}$ such that $\mathbb{E}[W_iZ_i]=\rho_i$.
\end{theorem}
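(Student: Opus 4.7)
The plan is to adapt the genie-aided technique of Annapureddy--Veeravalli \cite{AnnapureddyVeeravalli} to the PIMAC. The starting point is Fano's inequality applied at each receiver: since receiver~1 decodes $(m_1,m_2)$ and receiver~2 decodes $m_3$, we have $n(R_1+R_2)\leq I(X_1^n,X_2^n;Y_1^n)+n\epsilon_n$ and $nR_3\leq I(X_3^n;Y_2^n)+n\epsilon_n$. Augmenting each output with a side-information (``genie'') signal cannot decrease the mutual information, so for any $S_1^n,S_2^n$ the sum-rate satisfies $n(R_1+R_2+R_3)\leq I(X_1^n,X_2^n;Y_1^n,S_1^n)+I(X_3^n;Y_2^n,S_2^n)+2n\epsilon_n$. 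I would take $S_{1,k}=h_{12}X_{1,k}+h_{22}X_{2,k}+\eta_1 W_{1,k}$ and $S_{2,k}=h_{31}X_{3,k}+\eta_2 W_{2,k}$ with i.i.d.\ Gaussian $W_{1,k},W_{2,k}$ whose cross-covariances with $Z_{1,k},Z_{2,k}$ are prescribed by $\rho_1,\rho_2$.

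Second, expand each mutual information via $I(A;B,C)=h(B,C)-h(B,C\mid A)$. Conditioning on $(X_1^n,X_2^n)$ strips these inputs from both $Y_1^n$ and $S_1^n$, and similarly for $X_3^n$, so
\begin{align*}
I(X_1^n,X_2^n;Y_1^n,S_1^n)&=h(Y_1^n,S_1^n)-h(h_{31}X_3^n+Z_1^n,\eta_1 W_1^n),\\
I(X_3^n;Y_2^n,S_2^n)&=h(Y_2^n,S_2^n)-h(h_{12}X_1^n+h_{22}X_2^n+Z_2^n,\eta_2 W_2^n).
\end{align*}
The correlations $\rho_1,\rho_2$ are then chosen so that each subtracted joint entropy can be matched, up to a tractable remainder, with a portion of the unconditional entropy from the opposite link; this is the cancellation trick of \cite{AnnapureddyVeeravalli} that prevents the two pieces from having to be maximized independently.

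The main obstacle is proving that jointly Gaussian inputs with $\mathbb{E}[X_i^2]=P_i$ maximize the resulting single-letter sum of entropy differences. I would invoke a worst-additive-noise argument in the style of \cite[Lemma~1]{AnnapureddyVeeravalli}: after the reshuffling above, the remaining expression takes the form of an entropy difference for which Gaussian inputs are simultaneously optimal. The feasibility constraints $\eta_1^2\leq 1-\rho_2^2$ and $\eta_2^2\leq 1-\rho_1^2$ are precisely the positive-semidefiniteness conditions on the joint covariance matrix of $(Z_1,Z_2,W_1,W_2)$ that allow this argument to go through. Because the two MAC transmitters act as a single effective source for receiver~1, the single-interferer argument of \cite{AnnapureddyVeeravalli} extends with only cosmetic changes, with the pair $(X_1,X_2)$ playing the role of one input.

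Finally, letting $n\to\infty$ and $\epsilon_n\to 0$, each admissible tuple $(\rho_1,\rho_2,\eta_1,\eta_2)$ yields a Gaussian single-letter upper bound of the form $I(X_{1G},X_{2G};Y_{1G},S_{1G})+I(X_{3G};Y_{2G},S_{2G})$ on $R_1+R_2+R_3$. Taking the infimum over the feasible set of genie parameters gives the tightest such bound, which is exactly the expression $\overline{C}_\Sigma$ stated in the theorem.
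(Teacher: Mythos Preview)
Your high-level plan is the paper's: give receiver~$j$ the genie signal $S_j^n$ of the stated form, apply Fano, and show that i.i.d.\ Gaussian inputs with full power maximize the resulting sum. Two details in your write-up, however, do not line up with what actually makes the argument go through.

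First, the entropy expansion. You propose $I(X_1^n,X_2^n;Y_1^n,S_1^n)=h(Y_1^n,S_1^n)-h(h_{31}X_3^n+Z_1^n,\eta_1W_1^n)$ and similarly for the second term. This is a valid identity, but it does not isolate terms that depend on disjoint subsets of inputs, so the ``cancellation trick'' you allude to cannot be applied directly. The paper instead uses the chain rule $I(\cdot;Y,S)=I(\cdot;S)+I(\cdot;Y\mid S)$, producing four scalar entropies per link. Then $h(Y_1^n\mid S_1^n)$ and $h(Y_2^n\mid S_2^n)$ are bounded individually by the max-entropy property of Gaussians, the terms $h(S_1^n\mid X_1^n,X_2^n)$ and $h(S_2^n\mid X_3^n)$ do not depend on the inputs, and the crucial step is the cross-pairing $h(S_1^n)-h(Y_2^n\mid S_2^n,X_3^n)$ (depending only on $X_1,X_2$) and $h(S_2^n)-h(Y_1^n\mid S_1^n,X_1^n,X_2^n)$ (depending only on $X_3$). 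Each pair then becomes an entropy difference of the form $h(A^n+\eta W^n)-h(A^n+V^n)$, to which the worst-case-noise lemma of Diggavi--Cover applies and yields Gaussian optimality.

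Second, the origin of the constraints. The conditions $\eta_1^2\leq 1-\rho_2^2$ and $\eta_2^2\leq 1-\rho_1^2$ are \emph{not} positive-semidefiniteness conditions on the covariance of $(Z_1,Z_2,W_1,W_2)$; the $\eta_i$'s are external scalings and do not enter that covariance at all. They arise in the worst-case-noise step: after replacing $h(h_{12}X_1^n+h_{22}X_2^n+Z_2^n\mid W_2^n)$ by $h(h_{12}X_1^n+h_{22}X_2^n+V_2^n)$ with $V_2\sim\mathcal{N}(0,1-\rho_2^2)$ (this is \cite[Lemma~6]{AnnapureddyVeeravalli}), the inequality $\eta_1^2\leq 1-\rho_2^2$ is exactly the variance ordering needed so that Gaussian $A$ maximizes $h(A+\eta_1W_1)-h(A+V_2)$. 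The cross-indexing of $\eta_1$ with $\rho_2$ (and $\eta_2$ with $\rho_1$) reflects this cross-pairing, not any joint PSD condition. Once you adopt the chain-rule decomposition and justify the constraints this way, the remainder of your outline (full power is optimal, take $n\to\infty$, minimize over admissible genie parameters) matches the paper.
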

\begin{proof}
We use a genie aided approach similar to \cite{AnnapureddyVeeravalli} to bound $R_1+R_2$ and $R_3$. After adding the bounds, we observe that their sum is maximized by Gaussian inputs and we obtain the upper bound. Details are given in Appendix \ref{SumCapacityUpperBoundProof}.
\end{proof}

A trivial sum capacity lower bound is obtained by using Gaussian codes and treating interference as noise. The resulting lower bound is given by
\begin{align}
\label{CS_LB}
C_\Sigma\geq\underline{C}_\Sigma&=\frac{1}{2}\log\left(1+\frac{P_1+P_2}{1+h_{31}^2P_3}\right)\nonumber\\
&\quad+\frac{1}{2}\log\left(1+\frac{P_3}{1+h_{12}^2P_1+h_{22}^2P_2}\right).
\end{align}

In Figure \ref{Figure:SumCapacity}, we plot the upper bound $\overline{C}_\Sigma$ and the lower bound $\underline{C}_\Sigma$ as a function of SNR, the ration between the transmit power and the noise variance, for an IMAC with $P_1=P_2=P_3=P$, $h_{12}=h_{31}=0.2$, and $h_{22}=0.1$. Notice that this upper bound is nearly tight up to some value of SNR. Intuitively, this means that below some threshold value of the interference power, treating interference as noise achieves sum rate very close to the sum capacity $C_\Sigma$.

\begin{figure}
\centering
\psfragscanon
\psfrag{UB}[l]{$\overline{C}_\Sigma$}
\psfrag{LB}[l]{$\underline{C}_\Sigma$}
\psfrag{x}[t]{SNR(dB)}
\psfrag{y}[b]{Sum Rate(bits/channel use)}
\includegraphics[width=0.9\columnwidth]{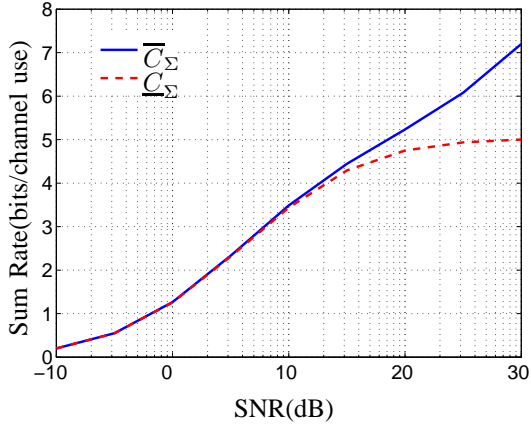}
\caption{Sum Capacity upper bound and lower bound for an IMAC with $P_1=P_2=P_3=P$, $h_{12}=h_{31}=0.2$, and $h_{22}=0.1$.}
\label{Figure:SumCapacity}
\end{figure}

\begin{remark}
In the special case of $h_{12}=h_{22}=h$, then the PIMAC becomes similar to a 2-user IC with transmit signals $X_1^n+X_2^n$ and $X_3^n$. In this case, the bounds $\overline{C}_\Sigma$ in Theorem (\ref{SumCapacityUpperBound}) and $\underline{C}_\Sigma$ in (\ref{CS_LB}) coincide if $|h|(1+h_{31}^2P_3)+|h_{31}|(1+h^2(P_1+P_2))\leq1$.
\end{remark}

\section{Conclusion}
\label{conclusion}
We studied a setup where a 2-user Gaussian multiple access channels interferes with a point-to-point Gaussian channel. We studied this setup under strong and very strong interference. We obtained the capacity region for some parameter regions. We also derived a sum capacity upper bound, which is very close to a sum capacity lower bound obtained by using Gaussian codes and treating interference as noise if the interference power is low.

\bibliography{/home/chaaban/tex/myBib}

\begin{appendices}

\section{Proof of Lemma \ref{pSILemma}}
\label{pSILemmaProof}
Clearly, for any achievable rate triple $(R_1,R_2,R_3)$, the rate $R_3$ is upper bounded by the interference free capacity of the P2P channel, and the rate couple $(R_1,R_2)$ is included in the  capacity region of the interference free MAC $M(\{1,2\},1)$, i.e. 
\begin{align}
R_3&\leq\frac{1}{2}\log(1+P_3)\\
(R_1,R_2)&\in\mathcal{C}^{M}(\{1,2\},1).
\end{align}

Now we derive the other bounds. A genie gives $m_2$ as additional information to the second receiver. The obtained genie aided channel has a larger capacity region than the original PIMAC and can be used to outer bound $\mathcal{C}$. Now consider a rate tuple $(R_1,R_2,R_3)$ in the capacity region of the genie aided channel. This means that the first receiver and the second receiver are able to decode $(m_1,m_2)$ and $m_3$ reliably respectively. Since the second receiver is able to decode $m_3$, and since it knows $m_2$ from the genie, then it is able to construct 
\begin{align}
\tilde{Y}_1^n&=\frac{Y_2^n-h_{22}X_2^n-X_3^n}{h_{12}}+X_2^n+h_{31}X_3^n\nonumber\\
&=X_1^n+X_2^n+h_{31}X_3^n+\frac{1}{h_{12}}Z_2^n
\end{align}
If $h_{12}^2\geq1$, then $\tilde{Y}_1^n$ is a less noisy version of $Y_1^n$. So if the first receiver is able to decode $m_1$ then so does the second receiver. Thus
$(R_1,R_3)$ is contained in the MAC region of $M(\{2,3\},2)$, i.e. 
\begin{align}
(R_1,R_3)\in\mathcal{C}^{M}(\{1,3\},2)  \text{ if } h_{12}^2\geq1.
\end{align}
Similarly, we can show that
\begin{align}
(R_2,R_3)\in\mathcal{C}^{M}(\{2,3\},2) &\text{ if } h_{22}^2\geq1.
\end{align}
Finally, for any rate triple $(R_1,R_2,R_3)\in\mathcal{C}$, the first receiver is able to decode $m_1$ and $m_2$ reliably. Thus, if $h_{31}^2\geq1$, receiver 1 can construct a less noisy version of $Y_2^n$ and then decode $m_3$ reliably leading to the bound
\begin{align}
(R_1,R_2,R_3)\in\mathcal{C}^{M}(\{1,2,3\},1) &\text{ if } h_{31}^2\geq1.
\end{align}
and the statement of the Lemma is proved.

\section{Proof of Theorem \ref{pSITheorem}}
\label{pSIProof}
We can expand the achievable region $\underline{\mathcal{C}}$
to obtain the following rate constraints for the achievable region
\begin{align*}
R_1&\leq\frac{1}{2}\log(1+\min\{P_1,h_{12}^2P_1\})\\
R_2&\leq\frac{1}{2}\log(1+\min\{P_2,h_{22}^2P_2\})\\
R_3&\leq\frac{1}{2}\log(1+\min\{P_3,h_{31}^2P_3\})\\
R_1+R_2&\leq\frac{1}{2}\log(1+\min\{P_1+P_2,h_{12}^2P_1+h_{22}^2P_2\})\\
R_1+R_3&\leq\frac{1}{2}\log(1+\min\{P_1+h_{31}^2P_3,h_{12}^2P_1+P_3\})\\
R_2+R_3&\leq\frac{1}{2}\log(1+\min\{P_2+h_{31}^2P_3,h_{22}^2P_2+P_3\})\\
R_1+R_2+R_3&\leq\frac{1}{2}\log(1+\min\{P_1+P_2+h_{31}^2P_3,\\
&\hspace{3cm}h_{12}^2P_1+h_{22}^2P_2+P_3\})
\end{align*}
Using conditions (\ref{pSI1}) and (\ref{pSI2}), these constraints reduce to
\begin{align*}
R_i&\leq\frac{1}{2}\log(1+P_i),\ \forall i\in\{1,2,3\}\\
R_1+R_2&\leq\frac{1}{2}\log(1+P_1+P_2)\\
R_1+R_3&\leq\frac{1}{2}\log(1+P_1+h_{31}^2P_3)\\
R_1+R_3&\leq\frac{1}{2}\log(1+h_{12}^2P_1+P_3)\\
R_2+R_3&\leq\frac{1}{2}\log(1+P_2+h_{31}^2P_3)\\
R_2+R_3&\leq\frac{1}{2}\log(1+h_{22}^2P_2+P_3)\\
R_1+R_2+R_3&\leq\frac{1}{2}\log(1+P_1+P_2+h_{31}^2P_3).
\end{align*}
Expanding $\overline{\mathcal{C}}$ when conditions (\ref{pSI1}) and (\ref{pSI2}) hold lead to the same rate constraints. Thus, the two regions $\overline{\mathcal{C}}$ and $\underline{\mathcal{C}}$ coincide and the result follows.

\section{Proof of Theorem \ref{SumCapacityUpperBound}}
\label{SumCapacityUpperBoundProof}
Consider the following signals
\begin{align}
S_1^n&=h_{12}X_1^n+h_{22}X_2^n+\eta_1W_1^n\\
S_2^n&=h_{31}X_3^n+\eta_2W_2^n
\end{align}
where $\eta_i\in\mathbb{R}$ and $W_i^n$ is an i.i.d. sequence with $W_i\sim\mathcal{N}(0,1)$ for $i\in\{1,2\}$ such that $\mathbb{E}[W_iZ_i]=\rho_i$. A genie gives these signals as extra information to the receivers. Thus, receiver $i$ knows both $Y_i^n$ and $S_i^n$ and uses them for decoding. Using Fano's inequality, we have
\begin{align}
n(R_1+R_2)&\leq I(X_1^n,X_2^n;Y_1^n,S_1^n)+n\epsilon_{1n}\\
nR_3&\leq I(X_3^n;Y_2^n,S_2^n)+n\epsilon_{2n}
\end{align}
where $\epsilon_{1n},\epsilon_{2n}\to0$ as $n\to\infty$. Using the chain rule, we can write
\begin{align}
n(R_1+R_2-\epsilon_{1n})&\leq I(X_1^n,X_2^n;S_1^n)+I(X_1^n,X_2^n;Y_1^n|S_1^n)\nonumber\\
&=h(S_1^n)-h(S_1^n|X_1^n,X_2^n)+h(Y_1^n|S_1^n)\nonumber\\
\label{Q1}
&\quad-h(Y_1^n|S_1^n,X_1^n,X_2^n).
\end{align}
Similarly
\begin{align}
n(R_3-\epsilon_{2n})&\leq h(S_2^n)-h(S_2^n|X_3^n)+h(Y_2^n|S_2^n)\nonumber\\
\label{Q2}
&\quad-h(Y_2^n|S_2^n,X_3^n).
\end{align}
The next step is to show that sum of the above quantities (\ref{Q1}) and (\ref{Q2}) are maximized when $X_i^n$ is i.i.d. Gaussian for all $i\in\{1,2,3\}$, i.e. $X_i\sim\mathcal{N}(0,P_i^*)$ where $P_i^*\leq P_i$. We will denote that Gaussian $X_i$ as $X_{iG}$ and the corresponding outputs as $Y_{iG}$ and $S_{iG}$. Since the Gaussian distribution maximizes the conditional differential entropy under a covariance constraint, we have
\begin{align}
h(Y_1^n|S_1^n)&\leq nh(Y_{1G}|S_{1G})\\
h(Y_2^n|S_2^n)&\leq nh(Y_{2G}|S_{2G}).
\end{align}
Furthermore, the terms $h(S_1^n|X_1^n,X_2^n)$ and $h(S_2^n|X_3^n)$ are independent of the distribution of $X_i$. Now, we jointly maximize the difference $h(S_1^n)-h(Y_2^n|S_2^n,X_3^n)$ as follows
\begin{align}
&\hspace{-1cm}h(S_1^n)-h(Y_2^n|S_2^n,X_3^n)\nonumber\\
&=h(h_{12}X_1^n+h_{22}X_2^n+\eta_1W_1^n)\nonumber\\
&\quad-h(h_{12}X_1^n+h_{22}X_2^n+Z_2^n|W_2^n)\\
&\stackrel{(a)}{=}h(h_{12}X_1^n+h_{22}X_2^n+\eta_1W_1^n)\nonumber\\
&\quad-h(h_{12}X_1^n+h_{22}X_2^n+V_2^n)\\
&\stackrel{(b)}{\leq}nh(h_{12}X_{1G}+h_{22}X_{2G}+\eta_1W_1)\nonumber\\
&\quad-nh(h_{12}X_{1G}+h_{22}X_{2G}+V_2)\\
\label{T1}
&=nh(S_{1G})-nh(Y_{2G}|S_{2G},X_{3G},X_{4G}),
\end{align}
where $(a)$ follows from \cite[Lemma 6]{AnnapureddyVeeravalli} with $V_2^n$ being an i.i.d. Gaussian sequence with $V_2\sim\mathcal{N}(0,1-\rho_2^2)$ and $(b)$ from the worst case noise \cite{DiggaviCover} such that $\eta_1^2\leq 1-\rho_2^2$. Since $\eta_1^2\leq1-\rho_2^2$, (\ref{T1}) can be shown to be increasing in $P_1^*$ and $P_2^*$ and thus is maximized by $X_{iG}\sim\mathcal{N}(0,P_i)$. Similarly, if $\eta_2^2\leq1-\rho_1^2$, $h(S_2^n)-h(Y_1^n|S_1^n,X_1^n,X_2^n)$ can be shown to be maximized by $X_{iG}\sim\mathcal{N}(0,P_i)$. Moreover, $h(Y_{1G}|S_{1G})$ and $h(Y_{2G}|S_{2G})$ can be shown to be increasing in $P_i^*$ and thus also maximized by $X_{iG}\sim\mathcal{N}(0,P_i)$. As a result, by letting $n\to\infty$
\begin{align}
\label{SumRate_n}
&R_1+R_2+R_3\nonumber\\
&\quad\leq I(X_{1G},X_{2G};Y_{1G},S_{1G})+I(X_{3G};Y_{2G},S_{2G})
\end{align}
with $X_{iG}\sim\mathcal{N}(0,P_i)$ and $\eta_i\leq1-\rho_j^2$.

\end{appendices}

\end{document}